\theoremstyle{plain} 
\newtheorem{thm}{Theorem}[section]
\newtheorem{theorem}[thm]{Theorem}
\newtheorem{definition}[thm]{Definition}
\newtheorem{corollary}[thm]{Corollary}
\newtheorem{lemma}[thm]{Lemma}
\newtheorem{observation}[thm]{Observation}
\newtheorem{example}[thm]{Example}
\begin{document}
\xyoption{all}

\title{Abstract Predicate Entailment over Points-To Heaplets is Syntax Recognition}

\author{\IEEEauthorblockN{Ren\'{e} Haberland, Kirill Krinkin, Sergey Ivanovskiy$^{\dag}$}
  \IEEEauthorblockA{Saint Petersburg Electrotechnical University "LETI"\\
  Saint Petersburg, Russia\\
  haberland1@mail.ru, kirill.krinkin@fruct.org}
}

\maketitle
\thispagestyle{plain}
\pagestyle{plain}
\pagenumbering{gobble}

\begin{abstract}
Abstract predicates are considered in this paper as abstraction technique for heap-separated configurations, and as genuine Prolog predicates which are translated straight into a corresponding formal language grammar used as validation scheme for intermediate heap states. The approach presented is rule-based because the abstract predicates are rule-based, the parsing technique can be interpreted as an automated fold/unfold of the corresponding heap graph.
\end{abstract}

\IEEEpeerreviewmaketitle

\section{Introduction}

In order to understand the term of an \textit{abstract predicate} better let us consider a simple example from \textit{Computational Geometry} \cite{deBerg08}, the doubly-connected edge list. As the name tells us it is a list of edges, where each edge connects two two or three dimensional vertices, usually. If a mesh is triangulated each face consists primarily of three edges, however each vertex coincides twice with a neighbouring vertex, because one edge starts and one edge ends at a vertex. Each vertex may coincide with at least two other vertices. The face structure chosen might improve certain calculations, for instance, the normal vector or seeking for neighbouring faces, etc.

In a points-to heap a location points to some value, in case of the previous example that might be vertices or edges which are pointed to by some variable location. According to the mentioned doubly-linked edge list each list entry may point forward to the next edge or backward to the previous edge. In terms of a points-to heap it is only necessary to specify linked heap entries, non-linked/interleaved memory does not coincide by definition. Imagine, faces, which consist of at least three edges, would need to be specified every time for a heap in a points-to model. That is why a heap predicate may express a complex but intuitive situation very easily, for instance \texttt{face(p1,p2,p3)} may denote that three vertices \texttt{p1,p2,p3} are connected along a closed circuit building up a face, rather than specifying every time $\exists v1.v2.v3$, s.t. \texttt{p1.data $\mapsto$ $v1$} $\star$ \texttt{p2.data $\mapsto$ $v2$} $\star$ \texttt{p3.data $\mapsto$ $v3$} $\star$ \texttt{p1.next $\mapsto$ p2} $\star$ \texttt{p2.next $\mapsto$ p3} $\star$ \texttt{p3.next $\mapsto$ p1} $\star$ \texttt{p1.prev $\mapsto$ p3} $\star$ \texttt{p3.prev $\mapsto$ p2} $\star$ \texttt{p2.prev $\mapsto$ p1}.

First of all abstraction means a generalisation, often by introduction of an additional parameters. In this paper the parameter will be primarily Prolog terms, but the underlying assertion language could be abstracted too. By an \textit{abstract predicate} we would like to refer to an arbitrary predicate of the heap (here) in Horn clause form with any number of parameters, which may refer potentially to any number of further abstract predicates. Although introduced by some authors in the past with capital letters, we tend to characterise those special predicates only with an adjective ``\textit{abstract}'', we believe that should be fine. For the example above, \texttt{face(p1,p2,p3)} stands for the longer $\star$-separated heaplets. Depending on the actual proof entailment it might be more or less appropriate to perform a fold or an unfold of \texttt{face} causing a proof step to succeed, proceed, fail or block. However, in a fully automated proof, the decision may be hard to find. This paper formalises this problem and chooses a non-traditional approach at a first glance in order to resolve this issue for heaps.

Warren \cite{warren83} chooses the term ``\textit{Programming by Proving}'' to stress Prolog can be used as programming language, and what happens, in contrast to other programming languages, is Prolog is searching for a solution -- this is what Warren means by proving. Apart from that the \textit{Howard-Curry} isomorphism states there is an interconnection between proving on the one side and programming on the other side. W.r.t. heaplets, the main thesis behind this paper can be summarised in a simplistic way as ``\textit{Proving is Parsing}'', meaning by parsing one can prove correctness of specified heap behaviour and that heap behaviour is in a representation close to programming, namely to Prolog rules. A main observation is abstract predicates describe implicitly a language, and so a program's semantic problem can be resolved by pure syntax recognition.

The structure of this paper is as following: First, current approaches are presented, current restrictions and open questions are provided. Then, heaps and assertions are introduced formally. Abstract predicates and related defintions follow. Afterward the translation problem is formulated. Then the major contribution of this paper is presented -- the observation that abstract predicates entailment for points-to heaps can be reduced to syntax parsing, properties of the translation are considered. Finally, implementation details are discussed and conclusions are wrapped up.

\section{Current approaches}

Since the approach presented in this paper does seem to not follow traditional approaches, only a short review on the closest topics is thought to be meaningful.

First, abstract predicates as presented in this paper are supposed to be understood intentionally close to \cite{reynolds02,parkinsonThesis05}, the most important claim is there is a spatial operator $\star$ which conjuncts two separated heaps and furthermore, for which the rules and axioms of the \textit{Separation Logic} hold which is basically a substructural logic where the contraction rule does not necessarily hold. In this paper abstract predicates just serve as placeholders for $\star$-conjuncted heap expressions, so no fundamental changes to the core logic are made.

Abstract predicates may be controlled by the user, as it was implemented by verifast \cite{jacobs11}, or semi-automated as done with some tactics applicable to inductively defined predicates as part of Coq \cite{bertot04}, or fully-automated but with external \textit{fold-/unfold} \cite{Hutton98} hints for the automated proof entailment. Although the last approach is most convenient from a user's perspective it is  challenging from an algorithmic point of view.

Prolog is used in this paper as assertion language. \cite{kowalski74} demonstrates how Prolog can be used in order to express in general first-order predicate logic by Horn-clauses. \cite{warren83} presents what a Prolog semantics in terms of an abstract machine denotes to.
 Kallmeyer \cite{kallmeyer10} provides a good introduction into Prolog used for parsing. Particularly \textit{adjoint-trees} are proposed as recognition technique of natural language mutations which do not occur in formal or programming languages. However, Prolog is used intensively for implementation. \cite{Matthews98} demonstrates by example how Prolog can be used to address ambiguous parsing issues coming from natural language. Matthews discusses recursive tree recognisers in Prolog which seem to match LL(k)-parsers by obeying several modifications to a grammar which is provided as Prolog program: an accepting state needs to be made explicit, rule recursion is simulated by stack which is put onto parameter lists. Matthews uses \textit{difference lists} in order to implement in fact a \textit{partial derivation automaton} for regular languages (see \cite{Brzozowski64}). Finally he proposes Definite Clause Grammars and built-in commands in order to alter Prolog's knowledge base dynamically.
 
\cite{pereira012} can be considered as a de-facto standard reference on Prolog for natural language processing. From \cite{pereira012} one can find Prolog is incomplete because of possible left-recursive clauses, even when a solution to a given Prolog program exists in general. A model used in order to understand the structure of a natural sentence is a \textit{$\lambda$-annotated} parse tree which -- depending on its context -- may be interpreted arbitrarily.

\section{Points-to heaplets}
\label{PointsToHeapletsSection}

\begin{definition} A \textit{heap assertion} $H$ is inductively defined:
\begin{tabular}{ll}
H ::= & $\textbf{emp} \ | \ \textbf{true} \ | \ \textbf{false} \ | \ x \mapsto E \ | \ H \star H$\\
      & $ | \ H \wedge H \ | \ H \vee H \ | \ \neg H \ | \ \exists x.H \ | \ a(\vec{\alpha})$
\end{tabular}
\label{HeapAssertionDefinition}
\end{definition}

The assertion \textbf{emp} denotes an empty heap which by default is always true, it is the neutral element for ``$\star$'' which separates two non-intersecting heaps. The assertion \textbf{true} denotes any heap (which always is satisfied), where \textbf{false} denotes an arbitrary heap which always interprets as false. This definition is similar to Reynolds' definition \cite{reynolds02}. The core component of this definition is $x \mapsto E$, where $x$ is some location identifier (might be an object field, like $o1.field1$), and where $E$ is some valid assignable expression. This definition does no checks on types, this is what is supposed to be processed at an earlier stage \cite{haberland14}. In this paper it is also less of importance if it shall be the immediate meaning or just an address in heap space that stands on the right-hand side (compare with \cite{burstall72}).

Let us consider now, for example, two arbitrary heap predicate definitions formed into Prolog:
\begin{verbatim*}
p2(X,Y):-pointsto(loc2,X),pointsto(loc3,Y).
p1(X,Y):-pointsto(loc1,val1),p2(X,Y).
\end{verbatim*}

Here, \texttt{p2} denotes some predicate with two symbols \texttt{X} and \texttt{Y}, which are values pointed by fixed locations \texttt{loc2} and \texttt{loc3}. In contrast \texttt{p1} is different, hence it refers to predicate \texttt{p2}. Whenever we call \texttt{p2} with two syntactically valid term arguments we would have a the form $a(\vec{\alpha})$. Let us remind Prolog just does not find a solution for syntactically valid but semantically invalid terms (meaning the predicate's domain does not include such term).

Interpreting some heap formula $H$ for a given heap maps from a heap domain and a comparison heap into the boolean co-domain, meaning if the provided heap matches the existing heap the interpretation succeeds, and fails otherwise (applying only existing facts and rules which is equivalent to \textit{modus ponens}). The proof in Prolog succeeds iff a true goal is found -- a possibility to succeed or a refutation rejects the proof, which with no doubt is exactly the desired behaviour.

For sake of simplicity we agree to canonise heaplets, s.t. we conjunct them all along '$\star$', having the normal form $a_0 \star a_1 \star \cdots \star a_n$ or shorter $\prod^n_{\forall j} a_j$ for some $n \ge 0$. Further, $\wedge$ and $\vee$-connected heaps are turned into Prolog subgoal enumerations of kind $s_j, s_{j+1}, \cdots s_{j+k}$, alternatively split up into separate rule alternatives (or join using the ``;''-operator). The negation of an assertion is treated as an functor-guarded negate of a predicate, e.g. \texttt{not($P$)}, where \texttt{not} is a reserved keyword and $P$ some predicate. Just to be mentioned prior to the next sections -- negating a sequence of (sub-)goals means those may not follow whilst parsing. The quantification of a fresh variable is allowed spontaneously by simply introducing a fresh variable besides the existing predicate variables.

Moreover, we agree to keep in \textbf{emp}, \textbf{true} and \textbf{false}, although they are syntactic sugar. The heap assertion \textbf{emp} is sugar because it could be dropped entirely instead, \textbf{true} may be substituted by a tautology in its interpreting conjunction, \textbf{false} may be interpreted analogously.

\begin{corollary}
 Without going too deep into details, the previous definition implies any valid heap graph can be expressed by induction (and vice versa under the assumption indefinite elements are removed prior to transformation, such as \textbf{true} for instance). The proof(s) would be straight forward and inductively defined over the mentioned definition(s).
\end{corollary}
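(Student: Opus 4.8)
The plan is to prove both implications by structural induction, one over finite heap graphs and one over the grammar of Definition~\ref{HeapAssertionDefinition}. First I would make precise what a \emph{heap graph} is: a finite labelled directed multigraph whose edges are exactly the pairs $(x,E)$ of a location $x$ and the expression $E$ it points to, i.e.\ the graph presentation of a finite partial function $h$. Under this reading ``expressible by induction'' means ``lies in the language generated by the productions of Definition~\ref{HeapAssertionDefinition}'', and the corollary splits into $(\Rightarrow)$ every finite heap graph is the denotation of some such assertion, and $(\Leftarrow)$ every assertion built from those productions, once the indefinite cases are stripped, denotes a heap graph.

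For $(\Rightarrow)$ I would induct on the number of edges $n$ of the graph. The base case $n=0$ is the empty graph, denoted by $\textbf{emp}$, a leaf of the grammar. For the step, pick any edge $x\mapsto E$, delete it, apply the induction hypothesis to obtain an assertion $H'$ denoting the smaller graph, and return $H' \star (x\mapsto E)$; this uses only the productions $\textbf{emp}$, $x\mapsto E$ and $H\star H$. The one side condition is disjointness of the two sub-heaps' domains, which holds automatically because $h$ is a function, so the singleton heap $\{x\mapsto E\}$ cannot clash with $H'$ once $x$ has been removed. Iterating yields the canonical form $\prod^{n}_{\forall j} a_j$ already announced in the text, and a routine check shows it denotes exactly the original graph.

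For $(\Leftarrow)$ I would induct on the structure of the assertion, working with the normal form discussed just before the corollary (heaplets canonised along $\star$, $\wedge$/$\vee$ pushed into Prolog subgoal enumerations and rule alternatives, $\neg$ realised as a functor-guarded \texttt{not}). The base cases $\textbf{emp}$ and $x\mapsto E$ are the empty graph and the single edge; $H_1\star H_2$ is the disjoint union of the two graphs supplied by the induction hypothesis (the $\star$ discharging the disjointness obligation); $\exists x.H$ is handled by introducing a fresh graph node as witness; an $\wedge$ of spatial conjuncts merges their subgoal enumerations; a $\vee$ yields one graph per rule alternative; and an abstract predicate $a(\vec{\alpha})$ is replaced by the body of its Prolog clause(s), the induction hypothesis being applied to the unfolded assertion. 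The excluded cases $\textbf{true}$, $\textbf{false}$ and $\neg H$ are precisely the ``indefinite'' ones — they denote, respectively, every heap, no heap, and a complement — which is why the hypothesis of the corollary removes them.

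The main obstacle is the abstract-predicate case in $(\Leftarrow)$: unfolding $a(\vec{\alpha})$ need not terminate, so the structural induction has no well-founded measure unless one assumes a finite-unfolding (or well-founded recursion) condition on the predicate definitions; without it the resulting ``graph'' is infinite or simply undefined. Closely related is the restriction to finite heaplets — an infinite heap graph would require an infinitary $\star$-product, outside the grammar — so I would state finiteness of $h$ and finite unfolding as standing assumptions. Everything else is the kind of bookkeeping the corollary already flags as ``straight forward'', so the proof effort concentrates on (i) fixing the graph/assertion dictionary, (ii) the disjointness side conditions for $\star$, and (iii) the termination hypothesis that makes the recursion on abstract predicates legitimate.
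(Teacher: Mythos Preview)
Your proposal is correct and follows exactly the approach the paper indicates: the paper gives no proof beyond the sentence already contained in the corollary itself, namely that ``the proof(s) would be straight forward and inductively defined over the mentioned definition(s).'' You have simply carried out that sketch in detail---structural induction on the heap graph for one direction and on the assertion grammar for the other---and in doing so you have also surfaced the finiteness and predicate-unfolding side conditions that the paper leaves implicit.
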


\begin{definition} A heap graph is a connected graph within the heap memory section and whose vertices may be pointed by at least one local variable from the stack memory. Each vertex is associated with a heap memory address, its length depends on its data structure. If a vertex is pointing to another vertex, both memory vertices coincide with a directed heap graph edge. If a vertex is pointed by two vertices, then one of which becomes an \textit{alias} of the other.
\end{definition}

\section{Abstract predicates}

Prolog is in this paper as programming language in which the assertion language for the heap is specified, and abstract predicates are part of it. Abstract predicates are defined as regular Prolog rules prior to using these rules later in an assertion formula. Often the assertion language does not match a formal specification, particularly for the heap, so this approach is an attempt to bridge this gap by using a logical programming language for logical reasoning, which abstract predicates are finally used for. For example, \cite{parkinsonThesis05} introduces an assertion language for predicates which is very independent, semantically and intentionally totally different from the surrounding programming and even assertion language, in fact (class-)typed variables are propagated typeless to predicates, program variables are simulated as being symbols rather than unidirectionally assignable locations, and the predicates are -- with big efforts -- tried to make up logical predicates as much as possible for what we  usually understand under a genuine (first-order) logical predicate.

The approach demonstrated in \cite{berdineCO05} introduces symbols for heaps, but not for denoting entire heaplets, so $X \star Y$ may not be used, for instance. In contrast to that, we allow symbols without such restrictions in Prolog, and we are not going to restrict ourselves to maximum matching rules only in general -- this does not imply there may not be introduced some tactics later on.

We use Prolog rather than an imperative programming language to specify the heap graph, because we believe the graph can be described better with predicates which are basically relations, and because the logical programming paradigm seems to be closer to the 1st-order predicate logic \cite{warren83} rather than the functional, for instance. It is also that the way heap assertions are supposed to be checked is closer to facts, rules and questions when it comes to reason logically about heap assertions.

Abstract predicates allow us to specify what the heap should look like, however the concern of compact specifications is due to the developer, regardless of how advanced abstract predicates are being processed.

The following formalisations will help us to describe the translation from abstract predicates into a grammar in the next section.

\begin{definition} A \textit{predicate rule} is defined as $\forall n.a:-q_{k \times n}.$ $\Leftrightarrow$ $a:-q_{k,0}, q_{k,1}, \ldots, q_{k,n}.$ for some arbitrary but fix integer $k$.
\end{definition}

It is said that $a$ holds whenever all its \textit{subgoals} $q_{k,j}$ hold for $0 \leq j \leq n$. The syntax of a predicate rule is defined as can be found in Fig. \ref{PrologRulesEBNF}.
\texttt{<number>} denotes any valid Prolog number, where \texttt{<atom> '(' <arguments> ')'} denotes some functor with atomic name \texttt{<atom>} and an arbitrary number of arguments. \texttt{<var>} denotes some variable symbol, which must start with an uppercase character letter, e.g. \texttt{X}. Fig. \ref{mapFunctionalExample} demonstrates an Prolog example.

\begin{figure}[t]
\begin{tabular}{c}
\begin{minipage}{7cm}
 \begin{grammar}
 <predicate> ::= <head> [ ':-' <body> ] '.'
 
 <head> ::= <atom> [ '(' <arguments> ')' ]
 
 <body> ::= <sub_goal> \{ ',' <sub_goal> \}*
 
 <sub_goal> ::= '!' | 'fail' | <functor_term>
  \alt <term> <rel> <term>
 
 <functor_term> ::= <atom> '(' [ <arguments> ] ')'
 
 <arguments> ::= <term> \{ ',' <term> \}*

 <term> ::= <atom> | <var> | <list> | <number>
   \alt <functor_term>
 
 <rel> ::= '=' | '\textbackslash =' | '\textless' | '\textless=' | '\textgreater' | '\textgreater='
 
 <list> ::= '[' [ <term> '|' ] <arguments> ']'
\end{grammar}
\end{minipage}
\end{tabular}
\caption{Extended Backus-Naur form for Prolog clauses}
\label{PrologRulesEBNF}
\end{figure}

Some predicate $a$ is evaluated by its subgoals left-to-right updating its symbol environment $\sigma$ every time:

\begin{tabular}{ll}
$C(a)\llbracket a(\vec{y}):-q(\vec{x}_{k,n})_{k \times n} \rrbracket \sigma =$ & \qquad\\
\multicolumn{2}{r}{\qquad $D \llbracket q_{k,n} \rrbracket \sigma(\vec{x}_{k,n}) \circ \cdots \circ D \llbracket q_{k,1} \rrbracket \sigma(\vec{x}_{k,1})$.}
\end{tabular}

By convention the term-vector $\vec{y}$ may intersect with $\vec{x}_{k,n}$, $\forall k,n$ and $C$ is of kind \texttt{atom} $\rightarrow$ \texttt{predicate} $\rightarrow \ \sigma \ \rightarrow \sigma$, $D$ has kind \texttt{subgoal} $\rightarrow \sigma \rightarrow \sigma$, and $\sigma$ is of kind $\texttt{term}^*$ $\rightarrow$ \texttt{term}, where $*$ denotes the Kleene-star operation for an arbitrary number of repetitions.

A subgoal $q_{k,j'}$ does not necessarily need to span a connected heap graph. However, if it does then obviously this does not only indicate some complete degree of \textit{separating concerns} which is a good pattern in software engineering, it also means that one abstract predicate actually pictures one entire problem locally. The corollary we can imply is: ``\textit{One abstract predicate shall correspond to one subheap}'', where a subheap contains a non-empty subset of vertices from the corresponding connected heap graph. Furthermore, by adding more and more $\star$-conjuncts we actually make the corresponding heap graph grow successively. The collection of $\star$-conjuncts forms a set of possibly connected with each other heaps which corresponds with abstract predicates, therefore abstract predicates in terms of points-to heaplets can be considered as a technique of specifying frames, or more generally speaking as a syntactic approach of specifying heaps. When talking about folding/unfolding of abstract predicates -- similar to function calls -- there exist parameters, namely heap graph vertices, which are available to both sides: a predicate's caller and callee side, and there are vertices that are only visible from within a predicate that cannot be references from the caller (at least not directly).

W.l.o.g. we agree that class objects field accessors, like \texttt{a.b}, are allowed according to Fig. \ref{PrologRulesEBNF} as \texttt{oa(object5,field123)} \cite{haberland14}. For sake of modularity and simplicity of demonstration and w.l.o.g., we further agree that class objects as well as object fields may be passed to predicates, and we do not need to worry about as long as the entire object is passed because in that case the treatment and behaviour does not change in comparison to regular automated variables as being mentioned later.

\begin{definition}
\label{PredicateRuleSetDefinition}
The \textit{predicate rule set} $\Gamma_a$ for some predicate name $a \in T$ and $\forall i,j. q_{i,j} \in (T\cup NT)$, where $T$ are terminals and $NT$ are non-terminals is defined as:\\\\
\begin{tabular}{ll}
$\Gamma_a ::=$ & $a:-q_{m \times n}$\\
 & $\Leftrightarrow
\begin{array}{llllllll}
 a:-    & q_{0,0} & , & q_{0,1} & , & \ldots & , & q_{0,m}\\
 \vdots & \vdots  &  & \vdots  &  & \ddots &  & \vdots\\
 a:-    & q_{m,0} & , & q_{m,1} & , & \ldots & , & q_{m,n}
\end{array}
$
\end{tabular}

If $m = 0$ then $a$ is a \textit{fact}. $a$ may be annotated by terms containing symbols (e.g. when $m=0$, $n>0$). If $t \in T$ then $t$ has the form $loc \mapsto val$, otherwise $t \in NT$ denotes the predicate name $t$ available in $\Gamma$.
\end{definition}

It is agreed that in a sequence $q_{k,0}, q_{k,1}, \ldots , q_{k,m}$ in $q_{m \times n}$ every line is canonised, such that for $s\leq m$ non-trivial entries the first $s$ subgoals are placed and that all remaining $m-s$ subgoals are tautologic subgoals with a domain entirely being true ($\top$).
Moreover, it is agreed that $\exists k.a:-q_{k} \preceq a:-q_{k+1}$ holds, meaning a predicate rule that occurs earlier in $\Gamma_a$ has a lower precedence than a predicate that is defined later.

\begin{corollary}
For the predicate environment $\Gamma$ of a Prolog program $\Gamma = \bigcup_{t \in T} \Gamma_t$ holds. All $\Gamma_t$ that depend on each other lay inside a \textit{predicate partition} $\overline{\Gamma_t}$. $\overline{\Gamma_t} \subseteq \Gamma$ holds.
\end{corollary}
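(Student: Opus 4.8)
The plan is to dispatch the three assertions of the corollary in turn, each one essentially an unwinding of Definition~\ref{PredicateRuleSetDefinition} together with the clause grammar of Fig.~\ref{PrologRulesEBNF}. For the identity $\Gamma = \bigcup_{t\in T}\Gamma_t$, I would first observe that $\Gamma$ is by definition the finite collection of all predicate rules of the program, and that by Fig.~\ref{PrologRulesEBNF} every predicate rule carries exactly one \texttt{<head>}, hence exactly one head predicate name $t$. Since Definition~\ref{PredicateRuleSetDefinition} declares $\Gamma_t$ to be precisely the rules headed by $t$, every rule of $\Gamma$ lies in one and only one $\Gamma_t$, and conversely $\Gamma_t\subseteq\Gamma$ for every $t$. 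Taking the union over all $t\in T$ therefore recovers $\Gamma$ exactly; as a by-product the family $\{\Gamma_t\}_{t\in T}$ is pairwise disjoint and so constitutes a partition of $\Gamma$, only finitely many members of which are non-empty, the remaining names in $T$ contributing $\emptyset$.

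For the second assertion I would make the informal phrase ``depend on each other'' precise by a call relation on rule sets: say $\Gamma_s$ \emph{directly depends} on $\Gamma_u$, written $\Gamma_s\to\Gamma_u$, whenever some rule of $\Gamma_s$ has a subgoal $q_{i,j}\in NT$ naming the predicate $u$ (a non-terminal, unfoldable callee, as opposed to a terminal $loc\mapsto val$). Let $\to^{*}$ be its reflexive--transitive closure and put $\Gamma_s\sim\Gamma_u$ iff $\Gamma_s\to^{*}\Gamma_u$ and $\Gamma_u\to^{*}\Gamma_s$; one checks routinely that $\sim$ is an equivalence relation (reflexivity by the convention that every rule set is $\to^{*}$-related to itself, symmetry since the defining conjunction is symmetric, transitivity from that of $\to^{*}$). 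Define $\overline{\Gamma_t}:=\bigcup\{\Gamma_s : \Gamma_s\sim\Gamma_t\}$. By construction any two rule sets that depend on each other lie in a common $\overline{\Gamma_\bullet}$, and the $\overline{\Gamma_t}$ form a coarsening of the partition obtained in the first step, hence they themselves partition $\Gamma$ into the ``predicate partitions'' (the strongly connected components of the call graph, singletons for non-recursive predicates).

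The third assertion $\overline{\Gamma_t}\subseteq\Gamma$ is then immediate: $\overline{\Gamma_t}$ is a union of certain $\Gamma_s$, and each $\Gamma_s\subseteq\Gamma$ by the first step, so the union is contained in $\Gamma$ as well. I do not expect any heavy lifting here; the only genuine decision point, and hence the ``hard part'', is fixing the intended meaning of mutual dependence. If one reads ``depend on each other'' as mere co-reachability (being in the same weakly connected component of the call graph) rather than mutual reachability, the partition becomes coarser and one must instead verify that weak connectivity is an equivalence relation; the statement holds either way, so the proof need only commit to one reading and note that the inclusion $\overline{\Gamma_t}\subseteq\Gamma$ is insensitive to the choice. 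A minor further point is the notational overloading of $T$ in Definition~\ref{PredicateRuleSetDefinition} (predicate names versus points-to terminals $loc\mapsto val$); I would absorb it by setting $\Gamma_t=\emptyset$ for any $t$ that heads no rule, which leaves both the union and the inclusion untouched.
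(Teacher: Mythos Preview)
Your proposal is correct and considerably more careful than the paper's own argument. The paper gives only a two-sentence sketch: it asserts that all mutually dependent $\Gamma_t$ lie inside a common partition, that independent partitions do not coincide, that every $\Gamma_t$ lies in $\overline{\Gamma}$, and that rule sets from distinct partitions cannot depend on each other. No relation is defined, no equivalence is checked, and the union identity $\Gamma=\bigcup_{t\in T}\Gamma_t$ is not argued at all.

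What you do differently is to actually \emph{construct} the partition rather than postulate it: you introduce the direct-call relation $\to$, pass to its reflexive--transitive closure, symmetrise to obtain an equivalence $\sim$, and recover $\overline{\Gamma_t}$ as the union over the $\sim$-class of $\Gamma_t$ (i.e.\ the strongly connected component of the call graph). This buys you a precise meaning for ``depend on each other'', a genuine verification that the $\overline{\Gamma_t}$ partition $\Gamma$, and a one-line proof of the inclusion $\overline{\Gamma_t}\subseteq\Gamma$. Your side remark that the weakly-connected reading would work equally well, and your handling of the overloaded $T$ via $\Gamma_t=\emptyset$ for non-head names, are both sensible and absent from the paper. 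In short: same destination, but you supply the road where the paper merely points.
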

\begin{proof}
 (sketch) The idea behind is to show all dependent $\forall t.\Gamma_t$ lay inside a partition, and all independent partitions, obviously, do not coincide with dependent predicate environments. Naturally, all predicate environments regardless if dependent or independent lay in $\overline{\Gamma}$. Predicates $\Gamma_a$ and $\Gamma_b$ from non-coinciding partitions in $\overline{\Gamma}$ can never depend on each other. 
\end{proof}
Remark: Obviously, due to the Halting problem the call of a predicate from a predicate partition may not terminate in general. Next, the expressibility of predicates is considered.

Remark: Possible naming clashes in $\Gamma$ may be resolved by mangling names including the location of the predicate, such as class where a predicate is defined etc. so the predicates become distinguishable. Predicates within the same location are believed to be together and hence do not clash by definition.
 
\begin{lemma}
\label{APsAreFOPsLemma}
Abstract predicates cover all first-order predicates.
\end{lemma}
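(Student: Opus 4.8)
The plan is to argue by structural induction on the syntax of first-order formulas, exhibiting for every connective and quantifier a construction of abstract predicates that captures it, leaning throughout on the canonicalisation conventions already fixed in Section~\ref{PointsToHeapletsSection}: a $\wedge$-connected formula becomes a subgoal enumeration, a $\vee$-connected one a set of rule alternatives, negation is rendered by the reserved \texttt{not} functor, and an existentially quantified variable is introduced as a fresh body variable. The statement to be proved is one of expressibility: for a first-order predicate $\varphi(\vec{x})$ over a signature of relation and function symbols there is a predicate name $p$ and a rule set $\Gamma_p$ (in the sense of Definition~\ref{PredicateRuleSetDefinition}) such that $p(\vec{x})$ holds exactly when $\varphi(\vec{x})$ is true under the intended interpretation.

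First I would fix the base case. An atomic formula $R(\vec{t})$ maps directly to a call of the form $a(\vec{\alpha})$ admitted by Definition~\ref{HeapAssertionDefinition} -- a fact when the relation is given extensionally, a defined predicate otherwise -- while function symbols occurring in the $\vec{t}$ become Prolog functors inside \texttt{<term>}s, which Fig.~\ref{PrologRulesEBNF} already permits. For the inductive step I would treat the constructors one at a time: given encodings $p$ of $\varphi$ and $q$ of $\psi$, conjunction $\varphi\wedge\psi$ is captured by $r(\vec{z}):-p(\vec{x}),q(\vec{y})$ with $\vec{z}$ the merge of $\vec{x}$ and $\vec{y}$ (the convention that $\vec{y}$ may intersect $\vec{x}_{k,n}$ handles the shared variables); disjunction $\varphi\vee\psi$ by the two alternatives $r(\vec{z}):-p(\vec{x}).$ and $r(\vec{z}):-q(\vec{y}).$; existential quantification $\exists y.\varphi$ by dropping $y$ from the head of $p$ and keeping it only in the body; negation $\neg\varphi$ by $r(\vec{x}):-\mathtt{not}(p(\vec{x})).$; and universal quantification $\forall y.\varphi$ by rewriting it as $\neg\exists y.\neg\varphi$ and reusing the two previous cases. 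Each step adds only finitely many rules referring to finitely many already-constructed predicates, so the induction terminates with a well-formed $\Gamma$, and the tautology-padding together with the $\preceq$-ordering required by Definition~\ref{PredicateRuleSetDefinition} can be imposed afterwards without affecting meaning.

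I expect the main obstacle to be the semantic faithfulness of the negation and universal cases rather than the syntactic construction. Prolog's \texttt{not} is negation as failure, which coincides with classical negation only on ground goals and under a closed-world or stratification discipline, and -- as the excerpt itself recalls via \cite{pereira012} -- Prolog is operationally incomplete in the presence of left recursion. The honest formulation is therefore: expressibility holds with respect to the declarative (least-model, respectively completion) semantics; the construction is carried out so that the resulting rule set is stratified, which makes the \texttt{not}-translation sound; and operational completeness is explicitly not claimed here. A secondary, purely bookkeeping obstacle is tracking the free-variable vectors when two bodies are merged and when a fresh variable is spontaneously quantified, which is routine given the conventions already adopted. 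The classical backbone of the argument is the clausal-form translation behind \cite{kowalski74}, the only genuine addition being that full first-order coverage -- as opposed to pure Horn coverage -- is recovered precisely by admitting the guarded \texttt{not}.
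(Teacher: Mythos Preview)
Your proposal is correct in spirit and considerably more detailed than the paper's own treatment, but it takes a genuinely different route. The paper's ``proof'' of this lemma is a one-line deferral: it simply cites Kowalski~\cite{kowalski74} for the expressibility of first-order predicate logic in Prolog and stops there. You instead give a constructive structural induction over the first-order connectives, exhibiting for each an explicit rule-set translation and flagging the semantic gap between classical negation and negation-as-failure. What this buys you is honesty about where the argument actually bites --- your caveat that the universal and negation cases are faithful only under a stratified, declarative reading is exactly the point the bare citation obscures, since Kowalski's original result is about Horn clauses and resolution, not full first-order with \texttt{not}. What the paper's approach buys is brevity and an appeal to authority; it treats the lemma as folklore and moves on. Both are defensible for a statement of this kind, but yours is the more careful of the two.
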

\label{firstOrderLemma}
\begin{proof}
Please refer to \cite{kowalski74} for first-order predicate logic completeness expressibility of Prolog.
\end{proof}

\begin{lemma}
\label{APsAreSOPsLemma}
Abstract predicates may express second- (and even higher-) order logic predicates.
\end{lemma}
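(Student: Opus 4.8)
The plan is to reduce second-order quantification to first-order quantification over predicate \emph{names}, exploiting that in this setting a predicate name is an \texttt{atom} and \texttt{atom} is a legal \texttt{<term>} by Fig.~\ref{PrologRulesEBNF}, so that a predicate may be handed to another predicate as an ordinary argument. Concretely, given a second-order formula $\phi$ over predicate variables $P_1,\ldots,P_r$ I would: (i) replace every bound occurrence $\forall P_i.\psi$, resp.\ $\exists P_i.\psi$, by an ordinary (first-order) quantifier over a fresh individual variable ranging over the set $T$ of predicate names that indexes the decomposition $\Gamma=\bigcup_{t\in T}\Gamma_t$; (ii) replace every atomic subformula $P_i(\vec{\alpha})$ by the meta-application \texttt{call}($P_i,\vec{\alpha}$), which resolves the bound atom against $\Gamma$ and re-enters the evaluation $C(\cdot)\llbracket\cdot\rrbracket\sigma$ with the rule found; and (iii) leave the remaining first-order skeleton untouched and discharge it through Lemma~\ref{APsAreFOPsLemma}. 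What then has to be checked is that the composite rewriting is again a predicate rule set in the sense of Definition~\ref{PredicateRuleSetDefinition}.

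For the heaplet reading this is natural: a genuinely second-order abstract predicate is, for instance, a shape predicate parameterised by a content predicate, as in \texttt{listof(P,X):-X=nil;pointsto(X,D),call(P,D),pointsto(next(X),Y),listof(P,Y).}, where the relation \texttt{P} constraining node contents is itself a first-order abstract predicate passed by name; quantifying ``over all content predicates'' is then literally quantifying over atoms of $T$. The same device encodes the usual second-order idioms (Leibniz equality read as ``$x$ and $y$ satisfy exactly the same predicates'', transitivity of an arbitrary relation, least/greatest fixed points used as frames, and so on).

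Higher orders follow by induction on the logical order $k$: a predicate of order $k+1$ quantifies over predicates of order $k$, each of which --- by the induction hypothesis --- is already represented by an atom of $T$ naming a rule in $\Gamma$; since such an atom is a \texttt{<term>} it can in turn be passed and \texttt{call}-ed, so the order-$(k+1)$ formula is handled by the very same three-step rewriting one level up. Hence abstract predicates reach every finite order, the apparent hierarchy collapsing because a clause in $\Gamma$ (``code'') and its name in $T$ (``data'') are interchangeable here.

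The main obstacle --- and the point where the statement should be read with care --- is semantic rather than syntactic. The built-in \texttt{call} meta-application is not literally present in Fig.~\ref{PrologRulesEBNF} nor in the clauses defining $C$ and $D$, so one must either admit it as a conservative built-in (as already done for \texttt{not}, and in the spirit of the dynamic knowledge-base mechanisms cited for Matthews) or desugar it via \texttt{=..}/univ plus a reflective lookup in $\Gamma$; in either case the soundness of that desugaring with respect to the $C$/$D$ evaluation must be verified. Furthermore, the encoding quantifies only over $\Gamma$-definable relations, i.e.\ it realises a Henkin-style / general-models reading of second-order logic rather than its full standard semantics --- so ``express'' is to be understood as faithful syntactic encoding of second-order \emph{predicates}, not as capturing full second-order validity; the termination caveats already noted (Halting problem, left recursion) carry over unchanged.
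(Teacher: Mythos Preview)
Your proposal is correct and rests on the same mechanism the paper uses: predicate names are atoms, atoms are terms, so a predicate can be passed as an argument and re-invoked via \texttt{call} (optionally assembled with \texttt{=..}); the paper's proof is essentially a proof-by-example exhibiting \texttt{map/3} built from \texttt{=..} and \texttt{call}, and then gesturing at \texttt{foldl} to indicate the same trick scales to further orders.

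Where you differ is in rigor rather than route. You give an explicit three-step rewriting of a second-order formula into Horn clauses, an induction on the logical order $k$, and a careful semantic caveat that the encoding realises only a Henkin-style reading of higher-order quantification (quantification over $\Gamma$-definable relations) rather than full standard semantics. The paper does none of this: it shows one concrete higher-order combinator and asserts the generalisation. Your treatment therefore buys a clearer statement of what ``express'' means and where the encoding's limits lie; the paper's buys brevity and a running example that ties directly into the later grammar translation. Your observation that \texttt{call} is absent from Fig.~\ref{PrologRulesEBNF} and must be admitted as a built-in or desugared is also a genuine gap in the paper's presentation that you have identified correctly.
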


\begin{proof}
Up to this point we were only interested to know through Prolog we can express any first-order predicate logic formula. The following explanation shows we are not restricted ourselves to first-order, but we even can express higher-order in Prolog.
In Prolog the built-in predicate \texttt{call} allows to call a certain predicate with a list of input and output terms to be passed, for example \texttt{pred1(X):-call(pred2,X)}. 

For example, let us define the mapping of a predicate \texttt{P} on an input list, we agree the input parameters shall be encoded in \texttt{[X|Xs]} and output to be \texttt{[Y|Ys]}. Then the \texttt{map} predicate denotes as shown in Fig. \ref{mapFunctionalExample}. Higher-order predicates may be particularly of interest especially when it comes to dealing with class-objects and \textit{inversion of control}, as it is the case in many \textit{behavioural design patterns}, for instance in the Observer-pattern.

\begin{figure}[t]
\begin{tabular}{l}
\begin{minipage}{7cm}
\begin{verbatim}
map([],P,[]).
map([X|Xs],P,[Y|Ys]) :- 
    Goal =.. [P,X,Y],
    call(Goal), map(Xs,P,Ys).
\end{verbatim}
\end{minipage}
\end{tabular}
\caption{map/3 functional}
\label{mapFunctionalExample}
\end{figure}

The type of \texttt{map/3} is $list_a \rightarrow (list_a \rightarrow list_b) \rightarrow list_b$.
So, by introducing third and even higher-order predicates, in analogy to functions we may beat recursion in many cases by using an implicit recursion via higher-order predicates, for instance by application of a left fold that consumes some predicate $\oplus$ and applies it when appropriate having the following signature: \texttt{foldl($\oplus::a\rightarrow b \rightarrow a$, $\varepsilon$::$a$, $X$::$list_b$)::$a$} (right folding works in analogy to that). Foldl defines an algebra with an initial value $\varepsilon$ and carrier set $X$ and one operation $\oplus$ which is defined on the same type as $\varepsilon$ and element-wise for each element of $X$ and calculates a result of same type as $\varepsilon$. For example let us assume we have $a$ equal $b$ are integers and let $X=[1,2,3]$ be of kind ``\textit{list of integers}'', let us further agree our inital counter $\varepsilon$ equals 7, then \texttt{foldl} will calculate $((\varepsilon+1)+2)+3)$ which is 13 which is obviously an integer.
\end{proof}

For sake of completeness of the syntax definition from Fig. \ref{PrologRulesEBNF} and the translation in the following section we need to think about how to deal with ``\texttt{;}'' and ``\texttt{!}''. Actually, both are syntactic sugar. 

If the body of a predicate rule contains ``\texttt{;}'' then all right of it has to be split up into a fresh rule with the same name as the origin, so $b :- a_0, a_1, ..., a_m; a_{m+1}, ... , a_n$ is split up into $b :- a_0, a_1, ..., a_m.$ and $b :- a_{m+1}, ... , a_n.$. If there is a cut inside $b :- a_0, a_1, ..., a_m,!, a_{m+1}, ... , a_n$ then $a_0$, $a_1$ until $a_m$ may fail in which case other rules $b$ may be considered as alternatives if any existing. ``\texttt{!}'' makes sure that if only one subgoal only from $a_{m+1}$ until $a_n$ fails $b$ entirely fails without searching for alternatives. This is again, syntactic sugar, because all possible alternatives may be left-factorised so no other alternatives may be allowed -- so, this sugar would insist of rewriting existing predicate rule sets with the same name. This is why without any loss of generality ``\texttt{;}'' and ``\texttt{!}'' may in Prolog be be dropped from further considerations of expressibility.

Lemma \ref{APsAreFOPsLemma} and \ref{APsAreSOPsLemma} conclude that we are able to express all we would like in terms of Prolog, and that we could rewrite some predicate classes without explicit recursion. However, we do not intent to restrict ourselves in terms of $\mu$-recursive predicates.

\begin{definition}
\label{PredicateFoldingDefinition}
The \textit{predicate unfolding/folding} $a(\vec{\alpha})$ of/into some predicate $a$ for some rule system $\Gamma_a$ with actual term values $\vec{\alpha}$/subgoals $q_k$ is defined as:
(because of lemma \ref{APsAreSOPsLemma}) let $\Gamma_a$ be w.l.o.g. $a(\vec{y}):-q_k$ with $q_k = q_{k,0}(\vec{x}_{k,0}),q_{k,1}(\vec{x}_{k,1}), ... ,q_{k,m}(\vec{x}_{k,m})$. Now, if $\vec{\alpha} = (\alpha_0,\alpha_1,..,\alpha_A)$ and $\vec{y} = (y_0,y_1,..,y_A)$, then 
$a(\vec{\alpha}) \Leftrightarrow q_{k,0}(\vec{x}_{k,0}),q_{k,1}(\vec{x}_{k,1}), ... ,q_{k,m}(\vec{x}_{k,m})$
with $\alpha_0 \approx y_0, \alpha_1 \approx y_1, ... , \alpha_A \approx y_A.$
\end{definition}

In case of ``$\Rightarrow$'' of the above equivalence of $a(\vec{\alpha})$ a predicate is unfold. In case of ``$\Leftarrow$'' the right-hand side is folded into a predicate call. ``$\approx$'' stands for term unification.

\section{Interpreting abstract predicates over heap as syntax recognition}

The goal of syntax recognition is to automate the check of heap predicates in specifications against an inferred heap state. The technique applied is syntactic for a semantic problem. The problem with predicates is the non-determinism of when to fold/unfold. Proof tactics have been implemented in theorem provers and checkers, like Coq \cite{bertot04}, in very dedicated domains only, but the quality of the fold/unfold is still far from satisfiable. An automated fold/unfold approach would be highly desirable, so additional specifications or even manual interactions can be zeroed. The new approach presented in this paper requires the following steps:

\begin{enumerate}
 \item Translate incoming program and annotated assertions into Prolog terms which are integrated into \cite{haberland14}.
 \item Define abstract predicates within an annotated section in the incoming program. These Prolog rules need to be syntactically correct.
 \item Generate formal grammar for found abstract predicates. File grammar over to a parser-generator which will finally emit a valid and concrete parser.
 \item  While running the proof, trigger certain parse rules depending on abstract predicate calls found.
\end{enumerate}

It will provide a different non-standard way of dealing with the problem.

\begin{observation}
\label{FormalLanguageObservation}
When looking at how a heap is generated it reminds a production system for formal languages.
\end{observation}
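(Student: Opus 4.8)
The plan is to make the resemblance precise by building a dictionary between the constituents of a heap specification and those of a production system, and then checking that heap construction and grammatical derivation are literally the same process modulo that dictionary. First I would read off from Definition \ref{PredicateRuleSetDefinition} the natural assignment: each abstract predicate name plays the role of a nonterminal, each points-to atom $loc \mapsto val$ the role of a terminal, and each predicate rule $a :- q_{k,0}, q_{k,1}, \ldots, q_{k,m}$ the role of a production $a \rightarrow q_{k,0}\, q_{k,1} \cdots q_{k,m}$; a fact ($m = 0$) becomes a terminal production, the canonised $\star$-normal form $\prod^n_{\forall j} a_j$ is a sentential form, the top-level heap assertion being checked is the start symbol, and $\Gamma = \bigcup_{t} \Gamma_t$ is exactly the finite production set. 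The precedence $\preceq$ on rules and the canonisation padding with tautologic $\top$-subgoals are then just harmless annotations on productions (ordering of alternatives, erasable symbols).

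Second, I would argue that the dynamics coincide. By Definition \ref{PredicateFoldingDefinition}, one unfolding step of $a(\vec\alpha)$ replaces that occurrence by the body $q_{k,0}(\vec x_{k,0}), \ldots, q_{k,m}(\vec x_{k,m})$ of one of its rules, which is precisely one application of a production to a sentential form; iterating unfolding from the top-level assertion until no predicate name remains yields a $\star$-conjunction of points-to atoms, i.e. a ``terminal word'', which is the concrete heap so generated. Conversely folding is the inverse rewrite, so ``building up'' a heap graph by successively adding $\star$-conjuncts — as described after Definition \ref{PredicateRuleSetDefinition} — is a leftmost-style derivation, and checking whether an inferred heap conforms to a specification becomes the membership/reachability problem for this production system. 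That last point is exactly the paper's slogan and is what the following section develops; here it suffices to observe the mechanism itself has the shape of a grammar.

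The honest caveat — and the reason this is stated as an observation rather than a theorem, and where I expect the real obstacle to lie when one tries to tighten it — is twofold. (i) The spatial connective $\star$ is associative and commutative, so the object generated is really a multiset of heaplets; the matching device is therefore a commutative, Parikh-style production system rather than a string grammar, and one must either fix a canonical traversal order of the heap graph or reason modulo AC-matching. (ii) Predicates carry term arguments and the equivalence in Definition \ref{PredicateFoldingDefinition} holds only up to unification $\approx$, so a faithful encoding is an attribute (definite-clause) grammar, not a plain context-free one; combined with the higher-order power of Lemma \ref{APsAreSOPsLemma} and with negation, the generated family can exceed the context-free class. For the observation as stated it is enough that the \emph{form} of the generative mechanism — finitely many rules rewriting a nonterminal into a string of terminals and nonterminals, driven by fold/unfold — is precisely that of a production system; the quantitative classification (which language class, decidability of membership) is deferred to the properties discussed later.
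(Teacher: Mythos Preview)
Your proposal is correct and follows essentially the same correspondence the paper sketches immediately after the observation: points-to atoms (and relations) as terminals, abstract predicate subgoals as non-terminals, predicate rules as productions, with the commutativity of $\star$ handled by canonically ordering heaplets by their left-hand-side location name. The paper's own justification is a single short paragraph; your dictionary, dynamics argument, and caveats (i)--(ii) are a faithful expansion of that sketch, and the attribute-grammar point in (ii) is exactly what the paper develops in the subsequent translation section.
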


Terminals become points-to expressions (cmp. with definition \ref{PredicateRuleSetDefinition}) or relations, and non-terminals become abstract predicate subgoals. Terminals are concatenated, where points-to expressions are loosely coupled with $\star$. $\star$ is commutative. Points-to expressions may be concatenated too, when the pairs are ordered according to the left-hand side location name. If local names clash, a namespace would resolve this by full qualification, e.g. by name prefixes.

\begin{theorem}
The predicate partition builds up an production system and manifests in fact a context-free grammar.
\end{theorem}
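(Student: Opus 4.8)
The plan is to exhibit an explicit bijective correspondence between a predicate partition $\overline{\Gamma_a}$ and a context-free grammar $G = (NT', T', P', S)$, and then to verify that $G$ satisfies the Chomsky Type-2 conditions. First I would fix the four components of $G$ from the data already formalised: take $S = a$ as the start symbol (the "entry" predicate of the partition); let $NT'$ be exactly the set of predicate names occurring in $\overline{\Gamma_a}$, which by Definition \ref{PredicateRuleSetDefinition} are precisely the symbols $t \in NT$ reachable from $a$; let $T'$ be the set of points-to expressions $loc \mapsto val$ together with the relational atoms from $\langle rel\rangle$ that appear as subgoals; and let $P'$ contain, for every predicate rule $a{:}{-}q_{k,0},q_{k,1},\ldots,q_{k,n}$ in $\Gamma_a$ (already canonised so that trailing tautologic subgoals are dropped or become $\varepsilon$), one production $A \to Q_{k,0}Q_{k,1}\cdots Q_{k,n}$, where each $Q_{k,j}$ is the terminal corresponding to $q_{k,j}$ if $q_{k,j}\in T$ and the non-terminal named $q_{k,j}$ if $q_{k,j}\in NT$. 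Facts ($m=0$) with an empty body map to $A \to \varepsilon$ (or, after removing the indefinite \textbf{emp}, are elided as in the Corollary on inductive expressibility).

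Next I would check the defining property of context-freeness: every production of $P'$ has a single non-terminal on its left-hand side. This is immediate from the shape of a predicate rule in Fig. \ref{PrologRulesEBNF} — the head $\langle head\rangle$ is a single $\langle atom\rangle$ (with arguments, which we are carrying separately as the annotation $\sigma$, not as grammar symbols), so no rule can have two or more symbols on the left. I would also invoke Observation \ref{FormalLanguageObservation} and the paragraph following it to justify that the right-hand side is a well-formed string over $NT'\cup T'$: $\star$-conjunction of points-to terminals is associative-commutative concatenation, rule alternatives with the same head become alternative productions for the same non-terminal (this is exactly how "$;$" was already eliminated), and "$!$" has already been argued away as syntactic sugar, so no production-ordering or look-ahead machinery leaks into $G$. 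The precedence relation $\preceq$ on rules of $\Gamma_a$ survives only as an ordering on the alternative productions for a non-terminal, which does not affect the language class. Conversely, given any such $G$ I would read the productions back as Prolog rules, establishing the claimed "manifests in fact" — the correspondence is onto the class of CFGs whose terminals are points-to/relation atoms.

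Finally I would argue that the production system "built up" by the partition is genuinely this $G$ and not something larger: by the Corollary on predicate partitions, $\overline{\Gamma_a}$ is closed under the dependency relation, so no production of $G$ refers to a non-terminal outside $NT'$, i.e. $G$ is self-contained; and the Halting-problem remark is harmless here because non-termination of folding corresponds merely to $G$ being possibly infinitely ambiguous or generating an infinite language, neither of which violates context-freeness. I expect the main obstacle to be the bookkeeping around predicate arguments: a Prolog rule carries term arguments $\vec{y}$ and $\vec{x}_{k,j}$ with unification ("$\approx$") linking caller and callee, and one must argue that pushing these into a side-channel environment $\sigma$ (of kind $\texttt{term}^* \to \texttt{term}$, as already typed) rather than into the grammar symbols is legitimate — essentially that the context-free skeleton and the attribute evaluation can be separated, which is the usual move from a definite-clause grammar to its underlying CFG. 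I would handle this by remarking that argument unification only constrains which derivations are \emph{meaningful} (via $\sigma$), not which strings are \emph{derivable}, so the string language is exactly $L(G)$ with $G$ context-free, while the attributes give the $\lambda$-annotated / decorated parse tree referred to earlier; the semantic filtering by $\approx$ is orthogonal to the Chomsky classification.
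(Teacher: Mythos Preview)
Your proposal is correct and follows essentially the same route as the paper: the key step in both is that a Prolog rule head is a single atom, hence every production has exactly one non-terminal on its left-hand side, and predicate arguments are set aside as attributes that do not affect the static rule dependencies. The paper's own proof is a four-sentence sketch of precisely these points (single left-hand non-terminal, not merely regular since the right-hand side is unrestricted, arguments do not alter dependencies, one start non-terminal per partition), so your version is a faithful but considerably more detailed elaboration of the same argument.
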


\begin{proof} The left-hand side of a predicate rule may only be no more than one non-terminal. It is more than regular because there is no such claim the right-hand side needs to be right-recursive. If the head of the rule contains arguments this still does not change statically the dependencies in between the predicates. A predicate partition has one starting non-terminal.
\end{proof}

\begin{observation}
When looking at how a heap is derived from abstract predicates one may think about reducing it.
\end{observation}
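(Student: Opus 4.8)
The plan is to read this observation as the exact dual of the top-down picture established by the preceding theorem (that the predicate partition manifests a context-free grammar). In that picture, \emph{deriving} a heap means repeatedly unfolding abstract predicates --- direction ``$\Rightarrow$'' of Definition~\ref{PredicateFoldingDefinition} --- so that, starting from the single starting non-terminal $S$ of the partition, each sentential form is an $\star$-conjunction of points-to terminals and predicate non-terminals, and a complete derivation $S \Rightarrow^{*} w$ ends in a pure heaplet $w = \prod_{j}^{n} a_{j}$ of points-to expressions in the canonical form fixed in Section~\ref{PointsToHeapletsSection}. The observation asserts that the inverse move --- replacing a matched instantiated rule body $q_{k,0},\ldots,q_{k,m}$ of $\Gamma_{a}$ by its head $a(\vec{\alpha})$, i.e.\ the fold direction ``$\Leftarrow$'' of Definition~\ref{PredicateFoldingDefinition} with the unification ``$\approx$'' doing the argument bookkeeping --- is a bottom-up \emph{reduction} in the ordinary shift--reduce sense, so that checking an inferred heap against an abstract-predicate specification is a parsing task.

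First I would make the duality precise: declare a single reduction step to contract some occurrence of an instantiated body of $\Gamma_{a}$ into the non-terminal $a$ (modulo $\approx$), and observe that by the preceding theorem this relation is literally the reverse of the grammar's one-step derivation relation, so a concrete heap $w$ reduces to $S$ exactly when $S \Rightarrow^{*} w$. Second, I would identify a run of the entailment proof against an inferred heap state with a search for such a reduction sequence: every fold performed during proof search is one reduction step, and the proof succeeds precisely when the heap state reduces to the starting non-terminal of the relevant predicate partition --- which is what a recogniser for the grammar computes (Observation~\ref{FormalLanguageObservation}). Third, I would note that the conventions already agreed --- points-to pairs ordered by their left-hand location, full qualification on name clashes, and commutativity of $\star$ --- are exactly what let us treat the heap state as a string rather than an unordered bag, so that standard bottom-up parsing applies without modification.

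The hard part will not be the duality itself, which is immediate once the partition is a context-free grammar, but controlling the non-determinism of the reduction. Because $\star$ is commutative the ``string'' is really a multiset, so several overlapping rule bodies may be reducible at the same place, and a predicate partition may even be $\mu$-recursive, so the reduction search need not terminate (cf.\ the Halting-problem remark following the corollary on predicate partitions). A faithful formalisation therefore has to cut down to a deterministic regime --- exploiting the rule precedence $\preceq$ on $\Gamma_{a}$ and the canonicalisation that pads bodies with tautologic $\top$-subgoals --- which is precisely the fold/unfold decision problem the paper sets out to resolve. For that reason I would present this observation as the motivation for the recogniser construction of the following sections rather than attempt a self-contained proof that a canonical reduction always exists and is unique.
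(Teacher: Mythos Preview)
Your reading of the observation is coherent but it is not the reading the paper adopts. You treat ``reducing'' as the fold direction of Definition~\ref{PredicateFoldingDefinition}, i.e.\ as bottom-up shift--reduce parsing of a concrete heap toward the start non-terminal, and then discuss the non-determinism of that search. The paper, by contrast, uses this observation to consider a \emph{bi-directional} scheme: both the inferred heap state and the expected specification may still contain folded predicate calls, and one would repeatedly unfold on \emph{both} sides until an equivalence is established. The paper then observes that this bi-directional matching problem can be reduced to Post's Correspondence Problem, is therefore undecidable in general, and is explicitly \emph{not} pursued further. In other words, the paper's treatment of this observation is a negative remark that rules an approach out, not a motivation for the recogniser that follows.

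So the gap is one of intent rather than of technical error: your duality argument and your caveats about $\star$-commutativity and $\mu$-recursion are fine as far as they go, and they are indeed close in spirit to the SHIFT/REDUCE machinery developed later in the Parsing section. But attached to \emph{this} observation the paper wants the opposite conclusion --- undecidability via PCP for the two-sided unfold-until-match formulation --- after which it returns to Observation~\ref{FormalLanguageObservation} as the viable route. If you want to keep your bottom-up reading, you should at least note that it sidesteps the PCP obstacle only because you assume one side is already a pure terminal word; once both $\alpha_{1}$ and $\alpha_{2}$ may contain non-terminals, your ``reduce to $S$'' picture no longer applies and the paper's undecidability remark is exactly the obstruction.
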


The implication underneath, however, would be both, inferred heap state and expected heap specification, still contain some folded predicate definitions which shall be unfolded until both sides establish an equivalence. This would be a bi-directional approach. However, that problem could be reduced to \textit{Post's Correspondence Problem} and is unfortunately undecidable in general, hence is not considered here any further.\\\\
Observation \ref{FormalLanguageObservation} seems promising, so the heap predicate check can be re-formulated as: ``\textit{Given a $\star$-connected heap, does it match a given heap specification or not?}''. But there might be further questions related, such as: ``\textit{What would be the closest correct heap, s.t. it satisfies the current heap specification?}'', which could deliver us answer to the counter-example problem.

\begin{lemma}
The word-problem for abstract predicates $P$ can be formulated as: Given $\alpha_1, \alpha_2 \in L(G(P))$ does $\alpha_1 \equiv \alpha_2$ hold ? $G(P)$ denotes the formal context-free grammar obtained from the predicate partition of $P$.
\end{lemma}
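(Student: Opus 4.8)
Since the statement is a reformulation rather than a hard combinatorial fact, the plan is to make precise the bridge between the semantic heap-matching question and membership in $L(G(P))$, and then to isolate the residual equivalence $\equiv$ that the formulation hides. First I would fix the translation already sketched after Observation~\ref{FormalLanguageObservation}: every points-to expression $x \mapsto E$ and every relational subgoal becomes a terminal, every abstract-predicate subgoal becomes a non-terminal, and every predicate rule $a:-q_{k,0},\dots,q_{k,m}$ of the partition $\overline{\Gamma_P}$ becomes a production $a \rightarrow q_{k,0}\,q_{k,1}\cdots q_{k,m}$, with $P$ as the single start non-terminal. By the preceding theorem this object is a context-free grammar $G(P)$. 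I would then record the key book-keeping fact: one leftmost derivation step of $G(P)$ is exactly one \emph{unfold} and one reduction step exactly one \emph{fold} in the sense of Definition~\ref{PredicateFoldingDefinition}, the unification $\approx$ being carried as an attribute side condition on the production rather than as grammar structure; the existential quantifier and the functor-guarded \texttt{not} from Section~\ref{PointsToHeapletsSection} are likewise absorbed as side conditions (fresh attribute variables, resp.\ ``this subderivation must not succeed''). Thus $G(P)$ is properly the context-free skeleton, and ``word problem'' is understood relative to that skeleton together with its guards.

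Second, I would prove that an inferred heap $h$ satisfies the specification $P(\vec\alpha)$ if and only if the canonical $\star$-normal form of $h$ -- all heaplets written along $\star$, the points-to pairs ordered by left-hand-side location name, local names fully qualified, \textbf{emp} elided -- is derivable from $P$, i.e.\ lies in $L(G(P))$. The forward direction iterates the fold/unfold correspondence along a satisfying unfolding of $P$; the backward direction reads a derivation tree of $G(P)$ bottom-up and recovers the $\star$-conjunction it witnesses. Hence the heap-predicate check becomes: given the $\star$-string $\alpha_1$ of the inferred heap, is there $\alpha_2 \in L(G(P))$ with $\alpha_1$ equivalent to $\alpha_2$?

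Third comes the quotient that this ``equivalent to'' hides. Because $\star$ is commutative, associative and has \textbf{emp} as unit, heaps live in a free commutative monoid (bags of points-to pairs and predicate atoms), whereas $L(G(P))$ is a set of words over a free monoid. I would therefore define $\equiv$ as the least congruence generated by associativity and commutativity of $\star$, the unit law for \textbf{emp}, the ordering convention on points-to pairs, and full name qualification; the canonical normal form then picks a unique representative of each $\equiv$-class, so ``$w \in L(G(P))$'' is well defined on $\equiv$-classes. With $\alpha_1$ presented as a word of $L(G(P))$ the match question degenerates to deciding $\alpha_1 \equiv \alpha_2$, which is exactly the claimed formulation; I would close by noting that decidability of this problem is a separate matter, already flagged by the Halting remark following Definition~\ref{PredicateRuleSetDefinition}.

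The hard part will be precisely this quotient: a plain context-free grammar recognises only one linear ordering of the $\star$-conjuncts, so one must show that parsing is insensitive to the order in which the specification and the inferred heap happened to list them -- equivalently, that the canonicalisation is confluent and commutes with the derivations of $G(P)$. Proving that the normal form is well defined and that permuting $\star$-conjuncts never changes membership, while keeping the attribute guards for existentials and negation faithful, is the delicate step; the purely syntactic part -- that the partition is a CFG and that fold/unfold are derivation and reduction -- is routine given the earlier results.
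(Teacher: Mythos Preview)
Your proposal is sound as a reformulation but takes a route that differs from the paper's in two respects. First, you fix a single start non-terminal $P$, treat $\alpha_1,\alpha_2$ as terminal words, and define $\equiv$ as the AC-plus-\textbf{emp} congruence on $\star$-strings; the paper instead lets $\alpha_1,\alpha_2$ be \emph{sentential forms} $(a+A)^*$ mixing terminals and non-terminals, allows every predicate call occurring in either $\alpha_i$ to act as a start symbol, and reads $\equiv$ operationally as bidirectional derivability---$\alpha_1 \vdash^* \alpha_2$ or $\alpha_2 \vdash^* \alpha_1$---together with consumption of matching \texttt{pointsto}-terminals and parameter unification along the way. Second, the paper already invokes the parser machinery (the first/follow sets $\pi,\sigma$) inside the proof, whereas you postpone all of that and isolate commutativity of $\star$ as the residual hard step. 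Your version buys a cleaner algebraic statement and makes the AC quotient explicit as the real obstruction to plain CFG membership; the paper's version buys a formulation that lines up directly with the later Algorithm~1, where both the inferred heap and the specification may still contain unexpanded predicate calls and the check proceeds by interleaved expansion on both sides rather than by first normalising each to a terminal word.
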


\begin{proof}
Here $\alpha = (a+A)^*$, and $a \in T$, $A \in NT$. $T$ denotes all terminals which are parameterised and encode source and target of ``$\mapsto$'', $NT$ denotes non-terminals which contain all predicates and parameterise all valid input terms. The rule set $P$ is the translated set of predicates, $L(G(P))$ is the language generated by the generated grammar by Prolog rules. The starting non-terminal is a predicate call from either $\alpha_1$ or $\alpha_2$. Regardless of the particular kind of parser to be used the follow set $\sigma(\alpha)$ and the first-terminal sets $\pi(\alpha)$ need to be calculated (see later). One important fact is there is not a single start non-terminal, but there might be several depending on number of predicate calls in $\alpha_1$ and $\alpha_2$. Furthermore, there is not only one path searched from $\alpha_1 \vdash^* \alpha_2$ but also $\alpha_2 \vdash^* \alpha_1$. Only if there is no path found in both directions $\alpha_1$ does not coincide with $\alpha_2$, otherwise it does. In order to check if two sentences match, it is not only necessary to construct paths between predicates, it is also necessary to consume initial and intermittent \texttt{pointsto}-terminals. Parameters on terminals and non-terminals shall be bound according to the current binding and unified with $\alpha_1$ and $\alpha_2$.
\end{proof}

\section{Translation of Horn-clauses into grammar}

This section considers how abstract predicates provided as Prolog rules are translated into a general context-free grammar.
Before that, Prolog rules need to be analysed lexically, so all expressions of form $loc \mapsto val$ are considered tokens. Multi-paradigmatic programming \cite{denti05} allows interpreting Prolog rules during execution of some main Prolog application, which, in our case, would be the verification. This process only needs to be done once until all abstract predicates have been processed. The translation process from Prolog rules into a formal grammar is astonishing simple. However, Prolog rules may have argument terms on the left and the right side of ``\texttt{:-}'', this can be modelled by introducing attributes to the generated grammar, we finally obtain an attributed grammar. Hence, the translation  $C\llbracket \rrbracket$ can be defined straight:

\begin{definition}
$C\llbracket \rrbracket$ is defined as rule transducer for an incoming abstract predicate set and attributed grammar as output:\\

\begin{tabular}{l}
 $C\llbracket \rrbracket = \emptyset$\\
 $C\llbracket C_1 . C_2 \rrbracket = C\llbracket C_1 \rrbracket \ \dot{\cup} \ C\llbracket C_2 \rrbracket$\\
 $C\llbracket a(\vec{x}):-q^{0}(\vec{x}), ..., q^{n}{(\vec{x})} \rrbracket$ = $\{ a_{\vec{x}} \rightarrow q^0_{\vec{x}} ... q^n_{\vec{x}}  \}$
\end{tabular}
\end{definition}

In contrast to previous notations subgoals here have upper indices and $\vec{x}$ now accommodates all variable symbols within of a predicate rule for notational comfort, so if a particular subgoal $q^j$ for some $j$ does not require all components of $\vec{x}$ then it does not. Remind $\dot{\cup}$ is a set union where the element sequence matters. As can be seen from both notations are pretty similar to each other and are interchangeable, the inverse operation $C^{-1}\llbracket \rrbracket$ translates an attributed grammar back into Prolog and can be defined as:

\begin{definition}
$C^{-1}\llbracket \rrbracket$ is defined as rule transducer for an incoming attributed grammar and an abstract predicate set as output:\\

\begin{tabular}{l}
 $C^{-1}\llbracket \rrbracket = \emptyset$\\
 $C^{-1}\llbracket C_1 \ C_2 \rrbracket = C^{-1}\llbracket C_1 \rrbracket \ . \ C^{-1}\llbracket C_2 \rrbracket$\\
 $C^{-1}\llbracket a_{\vec{x}} \rightarrow q^0_{\vec{x}} ... q^n_{\vec{x}} \rrbracket$ = $\{ a(\vec{x}):-q_0(\vec{x}), ..., q_n(\vec{x})\}$
\end{tabular}
\end{definition}

\begin{corollary}
$C \llbracket \rrbracket$ and $C^{-1} \llbracket \rrbracket$ terminate for any well-defined domain input.
\end{corollary}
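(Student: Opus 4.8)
The plan is to prove termination by \emph{structural induction} on the syntactic shape of the argument, exploiting that both $C\llbracket\rrbracket$ and $C^{-1}\llbracket\rrbracket$ are given by three defining equations whose right-hand sides recurse only on strictly smaller subterms. Concretely, a well-defined input to $C\llbracket\rrbracket$ is a finite sequence of Prolog clauses, each conforming to Fig.~\ref{PrologRulesEBNF}, built up from the empty program by the clause separator ``\texttt{.}'' read as an associative list constructor with the empty program as unit; dually, a well-defined input to $C^{-1}\llbracket\rrbracket$ is a finite set of attributed productions built up by juxtaposition. I would take as termination measure $\mu$ the number of clauses (resp. productions) in the argument; this is a natural number, so $<$ on it is well-founded, and well-founded recursion then gives termination provided every recursive call strictly decreases $\mu$.

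First I would verify that the three equations are exhaustive and mutually exclusive on well-defined inputs: every such input is either empty, or a non-trivial composition $C_1 . C_2$ (resp.\ $C_1\ C_2$) with both parts non-empty, or a single clause $a(\vec{x}):-q^{0}(\vec{x}),\dots,q^{n}(\vec{x})$ (resp.\ a single production). Hence the recursion is total, with no stuck configurations. Then I would treat each case: the empty case returns $\emptyset$ in one step; the composition case issues two recursive calls on arguments of measure strictly below $\mu$ and afterwards performs a single $\dot{\cup}$ (resp.\ ``$.$''), a finitary set operation; the single-clause case performs a bounded rewrite that reshapes the head and the finitely many subgoals $q^{0},\dots,q^{n}$ into one production (resp.\ one Horn clause) and returns the singleton. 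By well-founded induction on $\mu$ every call tree is finite, so both transducers halt; the argument for $C^{-1}\llbracket\rrbracket$ is the mirror image of the one for $C\llbracket\rrbracket$ with the roles of clauses and productions interchanged.

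The main obstacle — really the only delicate point — is the decomposition step: one must be sure that when the input is written as $C_1 . C_2$ the two pieces are \emph{proper} sub-objects, i.e.\ neither is empty nor equal to the whole argument, otherwise the recursion could stall or loop. This is settled by fixing the reading of ``\texttt{.}'' (and of juxtaposition) as the cons/append of a finite list modulo associativity and unit: a non-empty list is either a singleton, handled by the third equation, or splits as a head clause followed by a strictly shorter tail, which strictly decreases $\mu$. It is worth stressing that, unlike predicate \emph{evaluation} — which by the earlier Halting-problem remark may diverge in general — the transducers $C\llbracket\rrbracket$ and $C^{-1}\llbracket\rrbracket$ invoke no unification and no proof search; they are purely compositional syntactic relabelings, and it is precisely this absence of recursion through the knowledge base that makes unconditional termination available.
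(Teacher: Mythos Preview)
Your proposal is correct and follows essentially the same approach as the paper: both argue that the transducers merely walk the finite list of clauses (resp.\ productions) without ever following predicate references, so cycles among rules are irrelevant and termination is immediate. Your presentation via a well-founded measure and structural induction is more formal than the paper's one-line ``linear scan, no cycles'' remark, but the underlying idea---syntactic relabeling with no recursion through the knowledge base---is identical.
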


\begin{proof}
 The proof is rather trivial, since there is no infinite cycle possible. Both, $C \llbracket \rrbracket$ and $C^{-1} \llbracket \rrbracket$ linearly scan all incoming rules successively from left to right. Suppose, there was a cycle in between particular rules. Even so, both translations will finally terminate because cycles may bother only while parsing, not while translating. The starting point to a predicate partition corresponds one to one to the starting non-terminal of a subgrammar. There might be several entry points for abstract predicates, and so are the entry points corresponding to non-terminals for a grammar.
\end{proof}

We still need to investigate $C \llbracket \rrbracket$ and $C^{-1} \llbracket \rrbracket$ soundness and completeness.

\begin{corollary}
$C \llbracket \rrbracket$ and $C^{-1} \llbracket \rrbracket$ are total and both mappings are complete and sound.
\end{corollary}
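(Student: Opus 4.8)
The plan is to split the statement into three parts — totality, a round-trip (bijectivity) lemma, and the soundness/completeness pair — each handled by structural induction, with the round-trip lemma letting the two directions share a single argument. For \emph{totality}, first I would note that each of $C\llbracket\rrbracket$ and $C^{-1}\llbracket\rrbracket$ is given by exactly three clauses: the empty input, a sequential composition ($C_1.C_2$, resp.\ $C_1\ C_2$), and a single predicate rule $a(\vec x):-q^0(\vec x),\dots,q^n(\vec x)$, resp.\ a single production $a_{\vec x}\to q^0_{\vec x}\cdots q^n_{\vec x}$. By the syntax of Fig.~\ref{PrologRulesEBNF} — after the already-discussed elimination of ``$;$'', ``$!$'' and the sugar \textbf{emp}/\textbf{true}/\textbf{false} — every well-formed abstract-predicate set decomposes uniquely into such a finite sequence of rules, and symmetrically every attributed grammar arising from a predicate partition decomposes into productions. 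Thus the clauses cover the whole domain, the recursive calls strictly shrink the input, and together with the preceding termination corollary both mappings are total.

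Next I would prove the \emph{round-trip} identities $C^{-1}\llbracket C\llbracket P\rrbracket\rrbracket = P$ and $C\llbracket C^{-1}\llbracket G\rrbracket\rrbracket = G$ by induction on that decomposition. The base case is $\emptyset\mapsto\emptyset$; the composition case relies on $\dot{\cup}$ preserving the element order, so it mirrors the sequential ``$.$'' of Prolog clauses and in particular preserves the rule precedence $\preceq$ of Definition~\ref{PredicateRuleSetDefinition}; the single-rule case is direct inspection of the two third clauses, noting that the collapsed vector $\vec x$ records exactly the variable symbols of the rule, so a subgoal $q^j$ that ignores part of $\vec x$ on the grammar side ignores the same part on the Prolog side. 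Hence $C\llbracket\rrbracket$ and $C^{-1}\llbracket\rrbracket$ are mutually inverse bijections between well-formed predicate sets and the attributed grammars in their image, which supplies the remaining content of ``total''.

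For \emph{soundness} I would then show that every derivation $a_{\vec x}\vdash^{*}w$ in $C\llbracket P\rrbracket$ matches a predicate-unfolding sequence $a(\vec\alpha)\Rightarrow^{*}w$ admitted by $P$ in the sense of Definition~\ref{PredicateFoldingDefinition}, with the grammar's attribute bindings realising the unifications ``$\approx$''; \emph{completeness} is the converse. A single step $a_{\vec x}\to q^0_{\vec x}\cdots q^n_{\vec x}$ is, by the third clause of $C\llbracket\rrbracket$, in bijection with the rule $a(\vec x):-q_0(\vec x),\dots,q_n(\vec x)$, i.e.\ with exactly one unfold step for $a$; passing the actual parameters on the grammar side is precisely the $\alpha_i\approx y_i$ of the unfolding; and terminals $t\in T$ of the form $loc\mapsto val$ (together with the relational subgoals) are untouched by either step and reappear verbatim in $w$. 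Induction on derivation/unfolding length composes these, using the order-preservation of $\dot{\cup}$ so that precedence is honoured identically on both sides, and the round-trip result carries the whole statement over to $C^{-1}\llbracket\rrbracket$.

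The hard part will be the attribute layer. The translation deliberately collapses the per-subgoal vectors $\vec x_{k,j}$ into one $\vec x$ and speaks of an ``attributed grammar'', so before soundness/completeness become anything more than a syntactic restatement one must pin down that attribute evaluation in the generated grammar coincides with Prolog's unification, and that it does so uniformly under $\preceq$. A secondary nuisance is verifying that the earlier normalisations genuinely leave the unfold relation as the sole computational content of a predicate partition, so that context-free derivation is an adequate model of the corresponding proof search.
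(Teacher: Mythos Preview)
Your proposal is correct and considerably more thorough than the paper's own argument, which takes a genuinely different route. The paper dispatches the corollary in three sentences: it asserts the weaker \emph{generalized-inverse} identities $C\circ C^{-1}\circ C\equiv C$ and $C^{-1}\circ C\circ C^{-1}\equiv C^{-1}$ ``by simple substitution of the definitions'', recalls that ``\texttt{!}'' and ``\texttt{;}'' were already eliminated, and remarks that non-termination on the domain side is mirrored on the co-domain side. In particular the paper never unpacks what ``sound'' and ``complete'' are supposed to mean here; it treats them as immediate consequences of the round-trip equations on the syntactic transducers.

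Your decomposition --- totality by structural induction, the full inverse identities $C^{-1}\circ C=\mathrm{id}$ and $C\circ C^{-1}=\mathrm{id}$, and then a separate semantic layer matching grammar derivations $a_{\vec x}\vdash^{*}w$ to unfold sequences $a(\vec\alpha)\Rightarrow^{*}w$ via Definition~\ref{PredicateFoldingDefinition} --- proves strictly more and gives the corollary actual content. The trade-off is that you have taken on the burden you yourself flag at the end: aligning attribute evaluation with Prolog unification and checking that the normalisations leave unfold as the only computational step. The paper sidesteps both issues by never leaving the level of rule-for-rule syntactic correspondence; your approach buys a real adequacy theorem at the cost of that extra obligation.
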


\begin{proof}
 It is not hard to verify that $C \circ C^{-1} \circ C \equiv C$ hold as well as $C^{-1} \circ C \circ C^{-1} \equiv C^{-1}$ by simple substitution of the definitions from above. Because of the discussions in section \ref{PointsToHeapletsSection}, ``\texttt{!}'' nor ``\texttt{;}'' do not matter w.r.t. expressibility. If, however, the domain of $C\llbracket \rrbracket$ is supposed to not terminate, then its co-domain will cause exactly the same behaviour, same holds for $C^{-1}\llbracket \rrbracket$.
\end{proof}

\section{Parsing}

For the purpose of a simple and intuitive algorithm the constants from definition \ref{HeapAssertionDefinition} will not be considered. Because as mentioned earlier they are not intrinsic and can be dropped therefore. Essentially those heap constants provide partial heap expressiveness and may be considered for future research, w.r.t. class objects a \textbf{true} could possibly mean, for instance, to consume all \texttt{pointsto} until a (rule-dependent) marker pops up indicating a \textit{safe} synchronisation point in terms of error productions \cite{Grune90} for the ongoing parsing as described briefly. The input word is finite, however in general the number of unfolds may be hypothetically infinitely many -- but shown later the $\pi$-function allows us to pre-calculate the following terminal symbols with polynomial efforts.

This section introduces fundamental conventions necessary to complete some generic LL(k)-parser for demonstration purposes. Fundamentally, this is not only for an unlimited forward-looking LL parser, it is also possible to use some different parser, for instance based on a generalised LR or Earley parser. First, a sentence is defined as some composite of \texttt{pointsto} (terminals) and further subgoals with term arguments (non-terminals) -- something that the right-hand side of a (Prolog) rule comes up with. Second and third, in analogy to a LL(k)-parser but with functional space rather than single character as for strings the definitions of \textit{first} and \textit{follow} sets are introduced. Forth, both SHIFT and REDUCE are proposed for some general parser implementation.

\begin{definition}
An \textit{abstract sentence} $\alpha$ is a $\star$-conjunction of heaps which are denoted by $a \mapsto b$, where $a$ is some unique location identifier and $b$ some value object or \textbf{nil}.
\end{definition}

For example, $\alpha::=$ \texttt{[ pointsto(x,nil),} \texttt{pointsto(y,1), member(x,[y])]} may describe the current state of the heap during the verification of an imperative program, and $\star$ is replaced in the previous list by commas. The specification of a rule may insist on \texttt{[pointsto(Y,1),member(X,[Y|_]),pointsto(X,_)]}. So, what is necessary to check the abstract sentence from the program matches the sentence from the specification is primarily to check whether all parts from either of both is derivable from each other.

An abstract sentence may also contain term unification, such as \texttt{pointsto(X,5),X=Y}. Term unification has to be thoroughly analysed and separated from the remaining two cases, namely \texttt{pointsto} which denotes terminals, and predicate calls as subgoals which denote non-terminals later on. It has to be taken into consideration that not any terms may be unified, since there is a occurs-check taking place by intention w.r.t. the given implementation and by default in Prolog, so indefinite terms or recursive term definitions are strictly prohibitted.

Let us now formulate an initial algorithm in order to check two abstract sentences match or do not match, let us consider algorithms 1. $\pi$ denotes the function being introduced shortly. The problem is we reduce (factual predicate unfolding) possibly ad absurdum, we do not know determined when and how often to fold and unfold which obviously also depends on the rules themselves. Assume, we had some $\alpha_1 = [\underbrace{a \mapsto b}_{i_0},i_1,\cdots, i_{m_1}, \underbrace{q_1(x)}_{p_0}]$ and some $\alpha_2 = [\cdots \underbrace{a \mapsto b}_{j_3},\cdots, \underbrace{q_1(x)}_{q_7}]$ we would like to match against with. SHIFT-TERMs would first of all unify $i_0$ with $j_3$ and possibly continue with all other matching terms. REDUCE-PREDs will try match all matching predicates which may have to be unfolded first, that is why the first terminal of a predicate may be required first, the expansion $expand(p_k,\alpha_1)$ expands the predicate head by the corresponding body of $p_k$ (compare with definition \ref{PredicateFoldingDefinition} and Fig. \ref{PrologRulesEBNF}) into a new abstract sentence $\alpha'$ which might be described in Prolog as \texttt{concat($\alpha$,$[i_7,i_8,i_9],\alpha'$)}, if for instance $q_1(x)$ unfolds into $[i_7,i_8,i_9]$.

\begin{algorithm}[t]
\caption{A na\"{\i}ve algorithm for word equality check for abstract sentences; \textbf{Input:} $\alpha_1$ = $[i_0,...,i_{m_1},$ $p_0,...,p_{n_1}]$, $\alpha_2$ = $[j_0,...,j_{m_2},q_0,...,q_{n_2}]$ with $i$ and $j$ as \texttt{pointsto}-terminals, and subgoals $p$ and $q$ as non-terminals. $\Gamma$ contains all pre\-dicate definitions. \textbf{Result:} \textit{True} in case $\alpha_1$ equals $\alpha_2$, \textit{false} otherwise.}
\begin{algorithmic}[1]
\Procedure {SHIFT-TERMs}{$\Gamma$, $\alpha_1$, $\alpha_2$}
 \ForAll {$k \in \{0, ... , m_1\}$}
   \If {$\exists l. l \in \{0, ..., m_2\} \wedge i_k \approx j_l$}
     \State $\alpha_1 \leftarrow \alpha_1 \setminus i_k$
     \State $\alpha_2 \leftarrow \alpha_2 \setminus j_l$
   \EndIf
  \EndFor
\EndProcedure
\Statex
\Procedure {REDUCE-PREDs}{$\Gamma$, $\alpha_1$, $\alpha_2$}
 \ForAll {$k \in [0 .. n_1]$}
   \Comment{try match with terminal}
   \If {$\exists l. l \in [0..m_2] \wedge j_l \in \pi(p_k)$}
     \State $expand(p_k, \alpha_1)$
   \Else
   \Comment{try to reduce in $\alpha_2$}
     \If {$\exists l. l \in [0..n_2] \wedge (\pi(q_l) \cap \pi(p_k) \ne \emptyset) $}
       \State $\alpha_1 \leftarrow expand(p_k, \alpha_1)$
       \State $\alpha_2 \leftarrow expand(q_l, \alpha_2)$
     \Else
       \Comment{Match}
       \If {$m_1 = m_2 = n_1 = n2 = 0$}
         \State $true \rightarrow Halt!$
       \Else
         \Comment{Non-Match}
         \State $false \rightarrow Halt!$
       \EndIf
     \EndIf
   \EndIf
  \EndFor
\EndProcedure
\end{algorithmic}
 \label{AlgorithmEqualityCheck}
\end{algorithm}

\begin{definition}
The \textit{first set} is defined as co-domain of a total map $\pi$ with type $(T \cup NT) \rightarrow 2^T$ for $m > 0$, such that:\\

$\pi(a) ::= 
\left\{
	\begin{array}{ll}
		a  & \mbox{if } a \mbox{ is } X \mapsto Y \mbox{ or } \Gamma_a ::= a.\\\\
		\bigcup_{0 \leq j \leq n} \pi(q_{j,0}) & \mbox{if } \Gamma_a ::= a:- q_{m \times n}, n > 0.
	\end{array}
\right.
$
\end{definition}

Essentially, $\pi$ determines all terminals that start with \texttt{pointsto} or are beginnings (only the first terminal) of predicate subgoals (independent of its arguments).

\begin{definition}
The \textit{follow set} $\sigma(t) \subseteq T$ for $t \in (T \cup NT)$ is defined as:\\

$\sigma(t) ::= 
\left\{
	\begin{array}{lll}
		\bigcup_{i,j} \pi(q_{i,j+1})  & \mbox{if } t \mbox{ is at } (i,j<n) \mbox{ in } q_{m \times n}\\
		                              &            \wedge \ 0 \leq i \leq m\\
		                              &            \wedge \ q_{i,j+1} \neq \top\\
		                              &            \wedge \ \exists a.\Gamma_a ::= a:-q_{m \times n}\\\\
		\bigcup_{a} \sigma(a)                & \mbox{if }  t \mbox{ is at } (i,n) \mbox{ in } q_{m \times n}\\
		                              &            \wedge \ \Gamma_a ::= a:-q_{m \times n}\\
		                              &            \wedge \ \exists b.\Gamma_b ::= b:-q_{m_b \times n_b}\\
		                              &            \wedge \ a \mbox{ is at } (i_b,j_b) \mbox{ in } q_{m_b \times n_b}\\\\
		\emptyset                     & \mbox{otherwise}
	\end{array}
\right.
$
\end{definition}

The follow set determines literally all terminals that may follow a \texttt{pointsto} or a predicate subgoal from all considered rules. Now we have defined $\pi$ and $\sigma$ we are able to synthesise from this a LL(k) recogniser (\cite{Grune90} might be a helpful introduction).

\begin{example}
Given the following production rules $ q_1 \rightarrow a, \  q_2 \rightarrow a q_2 \ | \ q_3 b, \ q_3 \rightarrow \varepsilon \ | \ q_3 a$ these rules are obviously ambiguous, for instance in $\pi(q_2)=\{a\}, \sigma(a)=\{ \varepsilon\} \cup \pi(q_2) \cup \pi(q_3) \cup \sigma(q_3)$.
\end{example}

\begin{example}
Given the following finite specification $[ (loc1,v1), p1(loc1,loc2), (loc2,v2) ]$ and the word $[ (loc1,v1), (loc2,v2)  ]$ which is true, but only if $p1$ does not dump a heap.
\end{example}

In case the input word is not particularly a sequence of terminals, but an abstract sentence, it will be necessary to cut redundant calculations as early as possible. Hence, memoizing calculated subgoals would not only enhance the speed of search (since only the predicate name and its parameters play a role in memoization), it would resolve matching the first non-terminal issue, which by the way, matches neatly in the described definition of $\pi$ and $\sigma$.

Negated predicates are dropped here, refer to section \ref{Implementation} for details.

\section{Properties}

In Fig. \ref{heapConfiguration} a sample heap configuration is shown. This configuration consists of 8 triangles, where each triangle is crossed by either a dotted or a tortuous line. The line from the midpoint of $v_0$ and $v_3$ to $M_1$ denotes the triangle $\Delta(v_0,v_3,M_1)$, where the tortuous line between the midpoint of $v_0$ and $v_1$ and $M_1$ addresses $\Delta(v_0,M_1,v_1)$. And so, Fig. \ref{heapConfiguration} demonstrates there might be more than one way of spanning the heap graph by some provided heap predicate set, namely either by the triangles marked with dotted lines or with the tortuous lines. It is, however, essential that all vertices need to be included in a heap in order to decide heap graph isomorphism.\\

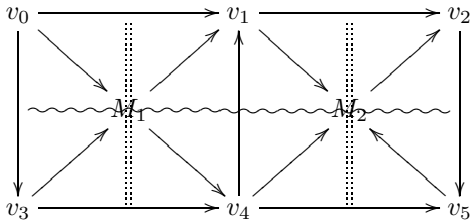
\begin{figure}[b]
\begin{center}
\begin{tabular}{l}
\begin{minipage}{7cm} 
\xymatrix{
 v_0 \ar[rr] \ar[rd] \ar[dd] &  \ar@{:}[dd]        & v_1 \ar[rr]        \ar[rd] & \ar@{:}[dd] & v_2 \ar[dd]\\
        \ar@{~}[rrrr]        & M_1 \ar[ru] \ar[rd] &                            & M_2 \ar[ru] & \\
 v_3 \ar[ur] \ar[rr]         &                     & v_4 \ar[uu]\ar[ru] \ar[rr] &             & v_5 \ar[lu]                         
}
\end{minipage}
\end{tabular}
\caption{A sample heap configuration where $v_j$ with $j \ne 2, j \ne 5$ outgoing edges are of class object type with more than one associated pointer attributes}
\label{heapConfiguration}
\end{center}
\end{figure}

There is (currently) only one position where a non-deterministic decision has to be taken out, namely the decision where to bound the input stream w.r.t. object boundaries. If introducing a convention that only common fields of an object are allowed and are canonised, the decision becomes determined. Obeying the convention could be performed within polynomial effort, as the rest of the parsing routine -- which is well-described and is known to be tractable within polynomial efforts (refer to \cite{Grune90} on parsing foundations). If all predicate partitions are parsable, as described with possible practical restrictions discussed in the previous sections, then entailment over points-to heaplets becomes decidable and finally terminates with an answer or proof refutation. The proof refutation will be in fact be a syntax error with corresponding coordinates in the points-to encoded input word according to the expected predicate partition.

The core memory model has not been modified nor extended, except the introduction of abstract predicate definitions over the existing points-to model. The proposed extension shall therefore by compatible e.g. with Reynolds' \cite{reynolds02} or Burstall's model \cite{burstall72}, however in this notation in contrast to Burstall cell addresses where not used, so this approach is conventionally closer to Reynolds. The consequences of Burstall's notation could be researched, since it is common practice compound objects are referred in practice by reference, not by its content.

\section{Implementation}
\label{Implementation}

The implementation is in GNU Prolog and uses ANTLR version 4 \cite{Parr12}, and is supposed to incorporate the framework presented in \cite{haberland14}, \cite{haberland15}. Initially both tools were chosen for simplicity and extensibility reasons and for lecturing purposes.

In order to gain from flexibility and a huge support of existing software packages, the chose programming paradigm is 
\textit{multi-paradigmatic} \cite{denti05}, which allows the developer to write and run programs in different programming languages and profit from both of its advantages. The integration of both works astonishing simple due to a Proxy design pattern and an interpretation in both directions. There exists also an experimental user interface based on tuProlog for prototyped development.
The implementation first translates input language into an intermediate representation which are Prolog terms, afterwards the assertions are copied separately into a Prolog theory, and abstract predicates are transformed into a ANTLR 4 grammar file as explained earlier. Whenever a parsing is requested, an internal syntax recognition process is initialised in the language the ANTLR outputted the recognisers (which is Java here). The output and control is passed back to the invoker. This way abstract predicates can be checked fully automated, and in case of an error the corresponding error will be processed.

ANTLR makes use of so-called ``\textit{syntactic and semantic predicates}'' in order to fight syntax ambiguity. Since ANTLR does not necessarily cover in practice all LL(k)-grammars strictly, there is of course still room for improvement. Practically this means that occasionally there may appear grammars which shall, but which do not parse due to current limitations of the ANTLR parser generator. There were made experiences other parser generators, e.g. LR(0)-parsers resolve this issue and even recognise left-recursion by definition, but lack from known shift-reduce restrictions on the other side therefore, like with GNU bison, for instance. A good introduction to parsing techniques can be found in \cite{Grune90}.

As an example of required transformation during the analysis of lexemes and tokens some precautions were required. First, $bar \mapsto foo$ is mangled to $pt\_3bar\_3foo$ where 3 is the length of the name or corresponding accessor. If the accessor is compound, e.g. $b.f.g$, then the accessor length avoid ambiguity. For instance \texttt{pointsto(X,2)} is mangled to a Prolog atom $p\_X\_2$. If needed, a mangled name can be demangled -- the internal parsing may be done by a Java helper which is then visible in Prolog as a helper built-in predicate via \cite{denti05}.
Second, the left-hand side (de-)canonisation (on Prolog level).
\texttt{p1(X,[X|Y]):-...} is transformed into \texttt{p1(X1,X2):-X1=X,X2=[X|Y],...}.
Third, a Prolog rule \texttt{p(X,Y):-$\alpha$} may be translated to a ANTLR-grammar snippet: 
\texttt{p[String X,String Y]:}
$\alpha$. This way all synthesised attributes may be passed top-down, inherited attributes may be modified to some predicate \texttt{p} by adding \texttt{returns} together with the inherited attribute names just before colon. So, all what is necessary is to decide whether a variable is inherited or synthesised in order to decide its position in the grammar snippet.

When abstract predicates are turned into a concrete grammar, e.g. a ANTLR grammar file, the problem arises that unified terms are together with \texttt{pointsto} terminals and non-terminals. Unifying terms need to be separated from terminals and non-terminals, therefore they are moved into translating rules beside the attributed grammar. For instance, ANTLR introduces translating rules using the curly brackets within a sentence.
Negated sentences and fragments of it can be introduced by ``$\sim$'' and brackets, and mean the included sentence may not appear. No further cases are discussed since either by attributes or translating rules additional behaviour may be mimiced, such as a failing predicate as a parser signal.

\section{Conclusions}
The approach presented proposes a technique to automatically entail points-to heaplets by syntax analysis rather than manually fold/unfold abstract predicates. If a predicate partition is representable as valid set of a parser's rule set then there will be definite answer whether heap specification and a existing heap configuration match. The model used in between Prolog and a concrete parser rule set is an attributed grammar \cite{Grune90} which translates inheriting and synthesising attributes which correspond to Prolog's head terms. We believe the implementation of stack frames which is different to those of common imperative programming language, gives Prolog an important advantage in reasoning since it is what we would expect from the attribute content control -- without the need of additional implementation nor development costs, because it is part of Prolog's core \cite{warren83}.  The used points-to heaplets may correspond to a Separation Logic styled model obeying its axioms and rules. It is true, Prolog's abstract machine is an interpreter and therefore on average slower than any natively compiled code, but the question of performance is minor in this case since we are mostly interested in exploring tractability and expressibility first of all - since verification is done separately from the generated code, we take intentionally the risk of being a bit slower occasionally.

Instead of \textit{simulating} only symbols within an artificially introduced assertion language, the assertions here are all expressed in the same language in which proofs will be taken out, Prolog. There is no overhead of mapping in between assertion and proof language as it is for instance the case in \cite{bertot04}. Symbols may be used very closely to the first-order predicate logic, symbols and terms may be unified, which is a bi-directional reasoning technique. It is believed expressibility in Prolog terms, especially with regards to objects, may improve over non-adequate representations, as it was proven concept in \cite{haberland08} on markup-notations for terms. It remains an open question due to term unification and the rule-based predicate partitions if and how error production rules may in fact advance the completeness of reasoning rules further, for instance w.r.t. abduction and proof explanation. Another practical advantage using Prolog whilst proving is the possibility to load parts or none of Prolog rules and facts in order to try some question without loading all at once, which makes error tracing comfortable.

Open questions and future work includes the possibility of partial heap specifications using constant keywords like \textbf{emp}, \textbf{true} and \textbf{false}, and the question if a proof may get simpler just if the heap graph is required to be connected. The last question arises for improved error location on proof refutations using error production rules to be introduced and invoked during parsing. It could stop, for instance, on a refutation and behave like in a ``\textit{panic mode}'' \cite{Grune90} consuming all input tokens until a synchronised save state or a sequence of consecutive known safe tokens.

\section*{Acknowledgement}
\dag This article is dedicated to Sergey Ivanovskiy in remembrance.


Parts of this paper are prepared as a contribution to the
state project of the Board of the Ministry of Education of the Russian Federation (task \# 2.136.2014/K).



\renewcommand{\refname}{References}

\end{document}